\newcommand*{\MyDef}{\mathrm{def}}
\newcommand*{\eqde}{\ensuremath{\mathop{\overset{\MyDef}{=}}}}
\newcommand\calF{\mathcal{F}}
\newcommand\calW{\mathcal{W}}
\newcommand\calS{\mathcal{S}}
\newcommand\esp[1]{{\mathchoice{\besp{#1}}{\sesp{#1}}{\sesp{#1}}{\sesp{#1}}}}
\newcommand\besp[1]{\mathbb{E}\left[#1\right]}
\newcommand\sesp[1]{\mathbb{E}[#1]}
\newcommand\Proba[1]{{\mathchoice{\bProba{#1}}{\sProba{#1}}{\sProba{#1}}{\sProba{#1}}}} 
\newcommand\bProba[1]{\mathbb{P}\left[#1\right]}
\newcommand\sProba[1]{\mathbb{P}[#1]}
\setlist[cases]{label=\textbf{Case \arabic*}, itemindent=0pt,leftmargin=45pt,  listparindent=1.5 em}
\begin{document}

\title{A new analysis of Work Stealing with latency}

\author{Nicolas Gast}
\orcid{}
\email{nicolas.gast@imag.fr}
\affiliation{%
  \institution{Univ. Grenoble Alpes, Inria, CNRS, LIG}
  \streetaddress{}
  \city{Grenoble}
  \state{}
  \postcode{38400}
  \country{France}
}
\author{Mohammed Khatiri}
\orcid{1234-5678-9012-3456}
\email{mohammed.khatiri@imag.fr}
\affiliation{%
  \institution{Univ. Grenoble Alpes, CNRS, Inria, LIG, France -  }
  \institution{University Mohammed First, Faculty of Sciences, LaRI, Morocco}
  \streetaddress{}
  \city{Grenoble - Oujda}
  \postcode{38400 - 60000}
}
\author{Denis Trystram}
\orcid{}
  \email{denis.trystram@imag.fr}
\affiliation{%
  \institution{Univ. Grenoble Alpes, CNRS, Inria, LIG}
  \streetaddress{}
  \city{Grenoble}
  \state{}
  \postcode{38400}
  \country{France}
  }

\author{Fr\'ed\'eric Wagner}
\orcid{}
\email{frederic.wagner@imag.fr}  
\affiliation{%
  \institution{Univ. Grenoble Alpes, CNRS, Inria, LIG}
  \streetaddress{}
  \city{Grenoble}
  \state{}
  \postcode{38400}
  \country{France}
}
\renewcommand\shortauthors{Gast et al.}

\begin{abstract}
  We study in this paper the impact of communication latency on the
  classical \emph{Work Stealing} load balancing algorithm.  Our paper
  extends the reference model in which we introduce a latency
  parameter. By using a theoretical analysis and simulation, we study
  the overall impact of this latency on the Makespan (maximum completion time).  
  We derive a new expression of the expected running time of a \textit{bag of tasks} scheduled
    by Work Stealing. This expression enables us to predict under which
  conditions a given run will yield acceptable performance.  For
  instance, we can easily calibrate the maximal number of processors
  to use for a given work/platform combination.  
  All our results are validated through simulation on a wide range of parameters.
\end{abstract}

%
%

\begin{CCSXML}
    <ccs2012>
    <concept>
    <concept_id>10003752.10003809.10003636.10003808</concept_id>
    <concept_desc>Theory of computation~Scheduling algorithms</concept_desc>
    <concept_significance>300</concept_significance>
    </concept>
    </ccs2012>
\end{CCSXML}

\ccsdesc[300]{Theory of computation~Scheduling algorithms}

%

\keywords{Work Stealing, Latency, Makespan Analysis}

%

\maketitle

\section{Introduction}
\label{sec:intro}

The motivation of this work is to study how to extend the analysis of the Work Stealing (WS) algorithm
in a distributed-memory context, where communications matter.
WS is a classical on-line scheduling algorithm proposed for
shared-memory multi-cores~\cite{Arora2001} whose  principle is recalled in the next section.
As it is common, we target the minimization of the \textit{Makespan},
defined as the maximum completion time of the parallel application.
We present a theoretical analysis for an upper bound of the expected makespan
and we run a complementary series of simulations in order to assess how this new bound behaves in practice
depending on the value of the latency.

\subsection{Motivation for studying WS with latency}
Distributed-memory clusters consist in independent processing elements with private local memories linked by an interconnection network.
In such architectures, communication issues are crucial, they highly influence the performances of the applications~\cite{MCMCA}.
However, there are only few works dealing with optimized allocation strategies and the relationships with the allocation and scheduling process is most often ignored.
In practice, the impact of scheduling may be huge since the whole execution can be highly affected by a large communication latency of interconnection networks~\cite{MultiCoreClusterArchitectur2015}.
Scheduling is the process which aims at determining where and when to execute the tasks of a target parallel application.
The applications are represented as directed acyclic graphs where the vertices are the basic operations and the arcs are the dependencies between the tasks~\cite{CosnardTrystram92}.
Scheduling is a crucial problem which has been extensively studied under many variants for the successive generations of parallel and distributed systems.
The most commonly studied objective is to minimize the makespan (denoted by $C_{\max}$) and the underlying context is usually to consider centralized algorithms.
This assumption is not always realistic, especially if we consider distributed memory allocations and an on-line setting.

WS is an efficient scheduling mechanism targeting medium range parallelism of multi-cores for fine-grain tasks.
Its principle is briefly recalled as follows:
each processor manages its own (local) list of tasks.
When a processor becomes idle it randomly chooses another processor and steals some work (if possible).
Its analysis is probabilistic since the algorithm itself is randomized.
Today, the research on WS is driven by the question on how to extend the analysis for the characteristics of new computing platforms (distributed memory, large scale, heterogeneity).
Notice that beside its theoretical interest, WS has been implemented successfully in several languages and parallel libraries including Cilk~\cite{Leiserson1998,Leiserson2009},
TBB (Threading Building Blocks)~\cite{Robison2008}, the PGAS language~\cite{Dinan2009,Seung-Jai2011} and
the KAAPI run-time system~\cite{Kaapi2007}.

\subsection{Related works}
\label{subsec:relatedworks}

We start by reviewing the most relevant theoretically-oriented works.
WS has been studied originally by Blumofe and Leiserson in~\cite{Blumofe1999}.
They showed that the expected Makespan of a series-parallel precedence graph
with $\mathcal{W}$ unit tasks on $p$ processors is bounded
by $E(C_{\max}) \leq \frac{\mathcal{W}}{p}+\mathcal{O}(D)$ where $D$ is the length of the
critical path of the graph (its depth).
This analysis has been improved in Arora \textit{et al.}~\cite{Arora2001} using potential functions. 
The case of varying processor speeds has been studied by Bender and Rabin in~\cite{Bender2002}
where the authors introduced a new policy called \textit{high utilization scheduler} that extends the homogeneous case. 
The specific case of tree-shaped computations with a more accurate model has been studied in~\cite{Sanders1999}. 
However, in all these previous analyses, the precedence graph is constrained to have only one source and an out-degree of at most 2 which does not easily model
the basic case of independent tasks. 
Simulating independent tasks with a binary precedences tree gives a
bound of $\frac{\mathcal{W}}{p}+\mathcal{O}(\log_2 (\mathcal{W}))$
since a complete binary tree of $\mathcal{W}$ vertices has a depth
$D \leq \log_2 (\mathcal{W})$.  However, with this approach, the
structure of the binary tree dictates which tasks are stolen.  In
complement,~\cite{Gast2010} provided a theoretical analysis based on a
Markovian model using mean field theory. They targeted the expectation
of the average response time and showed that the system converges to a
deterministic Ordinary Differential Equation.  Note that there exist
other results that study the steady state performance of WS when the
work generation is random including Berenbrink \textit{et
  al.}~\cite{Berenbrink2003}, Mitzenmacher~\cite{Mitzenmache1998},
Lueling and Monien~\cite{Lueling1993} and Rudolph \textit{et
  al.}~\cite{Rudolph1991}.  More recently, in \cite{Denis2013},
Tchiboukjian \textit{et al.} provided the best bound known at this
time: $\frac{\mathcal{W}}{p}+c.(\log_2 \mathcal{W})+\Theta(1)$ where
$c\approx3.24$.

In all these previous theoretical results, communications are not directly addressed 
(or at least are taken implicitly into account by the underlying model).
WS largely focused on shared memory systems and 
its performance on modern platforms (distributed-memory systems, hierarchical plateform, clusters with explicit communication cost) is not really well understood. 
The difficulty lies in the problems of communication which become more crucial in modern platforms~\cite{Arafat2016}.

Dinan \textit{et al.}~\cite{Dinan2009} implemented WS on large-scale clusters,
and proposed to split each tasks queues into a part accessed asynchronously by local processes 
and a shared portion synchronized by a lock which can be access by any remote processes in order to reduce contention.
Multi-core processor based on NUMA (non-uniform memory access) architecture is the mainstream today and such new platforms include accelerators as GPGPU~\cite{Yang2017}. 
The authors propose an efficient task management mechanism which is to divide the application into a large
number of fine grained tasks which generate large amount of small communications between the CPU and the GPGPUs.
However, these data transmissions are very slow.
They consider that the transmission time of small data and big data is the same.

Besides the large literature on theoretical works, there exist more practical studies implementing WS libraries where some attempts were provided for taking into account communications:
SLAW is a task-based library introduced in~\cite{Gue2010}, combining work-first
and help-first scheduling policies focused on locality awareness 
in PGAS (Partitioned Global Address Space) languages like UPC (Unified Parallel C). 
It has been extended in HotSLAW, which provides a high level API that abstracts concurrent task management~\cite{Seung-Jai2011}.
\cite{Shigang2013} proposes an asynchronous WS (AsynchWS) strategy which exploits opportunities to overlap communication with
local tasks allowing to hide high communication overheads in distributed memory systems. 
The principle is based on a hierarchical victim selection, also based on PGAS.
Perarnau and Sato presented in~\cite{swann2014} an experimental evaluation of WS on the scale of ten thousands compute nodes where the communication depends on the distance between the nodes.
They investigated in detail the impact of the communication on the performance. In particular, the physical distance between remote nodes is taken into account.
Mullet \textit{et al.} studied in~\cite{Muller2106} 
Latency-Hiding, a new WS algorithm that hides the overhead
caused by some operations, such as waiting for a request from a client or waiting
for a response from a remote machine.
The authors refer to this delay as \emph{latency} which is slightly
different that the more general concept we consider in our paper. 
Agrawal \textit{et al.} proposed an analysis~\cite{Agrawal} showing the optimality for task graphs with bounded degrees
and developed a library in Cilk++ called \textit{Nabbit} for executing tasks
with arbitrary dependencies, with reasonable block sizes.





\subsection{Contributions}

In this work, we study how communication latency impacts work
stealing. Our work has three main contributions. First, we create a
new realistic scheduling model for distributed-memory clusters
of $p$ identical processors including latency denoted by $\lambda$.
Second, we provide an upper bound of the expected
makespan.  This bound is composed of the usual lower bound on the best
possible load-balancing~$\frac{\mathcal{W}}{p}$ plus an additional
term proportional to $\lambda\log_2(\frac{\mathcal{W}}{\lambda})$
where $\mathcal{W}$ and $p$ are the total amount of work and the total number of processors respectively.
Third, we provide simulation results to assess this bound.  These
experiments show that the theoretical bound is roughly 5 times greater
than the one observed in the experiments but that the additional term
has indeed the form $c\lambda\log_2(\frac{\mathcal{W}}{\lambda})$. The
theoretical analysis shows that $c<16.12$ while the simulation results
suggest that $c\approx3.8$. 

The analysis is based on an adequate potential function.
There are two reasons that distinguish this analysis in regard to the existing ones:
finding the right function (the natural extension does not work since we now need to consider
in transit work).
Its property is that it should diminish after any steal related operation.
We also consider large timesteps of duration equal to the communication latency.



\section{Work-Stealing}
\label{sec:wsmechanisms}

In this section, we introduce some formal notations and the WS algorithm.
Finally we present the variant of the WS algorithm that we analyse.

\subsection{Notation and definition}

We consider a discrete time model. There are $p$ processors. We denote
by $\calW_{i}(t)\in\{0,1,\dots\}$
 the amount of work on processor $P_{i}$ at time $t$ (for
 $i\in\{1\dots p\}$). At unit of work
corresponds to one unit of execution time.  We denote the total amount
of work on all processors by $\calW(t) = \sum_{i=1}^{p}
\calW_{i}(t)$.
At $t=0$ all work is on $P_{1}$. The total amount of work at time $0$
is denoted by $\calW = \calW_{1}(0)$.

\subsection{WS algorithm}

Work Stealing is a decentralized list scheduling algorithm
where each processor $P_i$ maintains its own local queue $Q_i$ of tasks to execute. 
$P_i$ uses $Q_i$ to get and execute tasks while $Q_i$ is not empty. 
When $Q_i$ becomes empty $P_i$ chooses another processor $P_j$ randomly
and sends a steal request to it. When $P_j$ receives this request,
he answers by either sending some of its work or by a fail response.
We will see bellow the case of the fail response.

The answer of $P_j$ can be to transfer
some of its work or a negative response. 

We analyse one of the variants of the WS algorithm that has the
following features: 

\begin{itemize}

    \item \textbf{Latency:} All communication takes a time $\lambda$
        that we call the latency. A work request that is sent at time $t-\lambda$
        by a thief will be received at time $t$ by the victim. The thief
        will then receive an answer at time $t+\lambda$. As we consider a discrete-time model,
        we say that a work request arrives at time $t$ if it arrives between
        $t-1$ (not-included) and $t$.  This means that at time $t$, this work
        request is treated.  The number of incoming work requests at time $t$
        is denoted by $\mathcal{R}(t)$.  Note that
        $\forall t, 0 \leq \mathcal{R}(t) \leq p-1$.  It is equal to the
        number of processors sending a work request at time $t - \lambda$.

        When a processor $P_i$ receives a work request from a thief $P_j$, it
        sends a part of its work to $P_j$. This communication takes again
        $\lambda$ units of time.  $P_j$ receives the work at time $t+\lambda$.
        We denote by $\mathcal{S}_{i}(t)$ the amount of work in transit from
        $P_i$ at time $t$.  At end of the communication $\mathcal{S}_{i}$
        becomes 0 until a new work request arrives.
    \item \textbf{Steal Threshold:} The main goal of WS is to share work
        between processors in order to balance load and speed-up execution.
        In some cases however it might be beneficial to keep work local and
        answer negatively to some steal requests.  We assume that if the
        victim has less than $\lambda$ unit of work to execute, the steal
        request fails.
    \item \textbf{Single work transfer:} We assume that a processor can
        send some work to at most one processor at a time.  While the
        processor sends work to a thief it replies by a fail response to any
        other steal request.  Using this variant, the steal request may fail
        in the following cases: when the victim does not have enough
        work or when it is already sending some work to another thief.
        Another case might happen when the victim receives more than one
        steal request at the same time. He deals a random thief and
        send a negative response to the remaining thieves.
    \item \textbf{Work division:} in order to keep a balanced WS, we
        consider that the victim sends to the thief a part of work such that
        both loads will be balanced at the end of the communication.  More
        precisely, if $P_i$ receives a work request from $P_j$ at time $t$
        then:
        \begin{equation}
            \label{equ1}
            \calW_i(t) = \frac{\calW_i(t-1) -1 + \lambda}{2} \mbox{ and } \mathcal{S}_i(t) = \frac{\calW_i(t-1) -1 - \lambda}{2}
        \end{equation}
        After a time $\lambda$:
        \begin{equation}
            \label{equ2}
            \calW_{i}(t+\lambda) = \calW_i(t) - \lambda =
            \frac{\calW_{i}(t-1) -1 -\lambda}{2} = \mathcal{S}_i(t) =
            \calW_{j}(t+\lambda)
        \end{equation}
\end{itemize}

\section{Theoretical Analysis}
\subsection{Principle}

Before presenting the detailed analysis, we first describe its main steps.

We denote by $C_{\max}$ the makespan (\emph{i.e.}, total execution
time).  In a WS algorithm each processor either executes work or tries
to steal work. As the round-trip-time of a communication is $2\lambda$
and the total amount of work is equal to $\calW$ and the number of processors is $p$, we have
$pC_{\max} \le \calW + 2\lambda\#\textit{StealRequests}$ where $p$ is the number of processors.
We therefore have a straightforward bound of the Makespan:
\begin{equation}
  \label{Cmax}
  \mathcal{C}_{max}  \leq \frac{\calW}{p} + 2\lambda\frac{\#StealRequests}{p}
\end{equation}
Note that the above inequality is not an equality because the
executing might end while some processors are still waiting for work.

Our analysis makes use of a pontential function that represents how
well the jobs are balanced in the system.  We bound the
number of steal requests by showing that each event involving a steal
operation contributes to the decrease of this potential function. 
This analysis shares some similarities with the one of \cite{Denis2013}
as we make use of a potential function that decreases with steal
requests.  The key difficulty to apply these ideas in our case is that
communications take $\lambda$ time units. At first, it seems that
longer durations should translate linearly into the time taken by
steal requests but this would neglect the fact that longer steal
durations reduce the number of steal requests.  

In order to analyse the impact of $\lambda$, we reconsider the time
division as periods of duration $\lambda$.  We analyse the system at
each time step $k\lambda$ for $k\in\{0,1,2,\dots\}$.
To simplify the notations between the time $t$ and the interval division
$k\lambda$, we denote respectively by $w_i(k)$ and $s_i(k)$ the quantities
$\calW_i(k\lambda)$ and $\mathcal{S}_i(k\lambda)$.  We denote by
$\phi_i(k)$ the part of the potential linked to processor~$i$ and by
$\phi(k) = \sum_i \phi_i(k)$ the potential.  We also define the total
number of incoming steal work requests in the interval
$(\lambda(k-1), \lambda k] $ by
$r_k = \sum_{j=1}^{\lambda}\mathcal{R}((k-1)\lambda+j)$ and we denote
by $q(r_k)$ the probability that a processor receives one or more
requests in the interval $(\lambda(k-1), \lambda k]$ (this function
will be computed in the next section).

In the next section, we analyze the decrease of $\phi(k)$ as a
function of the number of steal requests.
We show that there exists a function
$h: \{1 \cdots p\} \rightarrow [0,1]$ depending on the number of
incoming steal requests $r_k$ in the time interval
$(\lambda(k-1), \lambda k]$, such that in average, $\phi(k+1)$ is less
than $h(r_k)\phi(k)$.  Finally, we use this to derive a bound on the
total number of steal requests. By using equation \eqref{Cmax}, we
obtain a bound on the Makespan.

\section{Detailed Analysis}

In this section, we prove the main result of the paper, which is a
bound on the total completion time $C_{\max}$ and is summarized by the
following theorem :
\begin{theorem}
    \label{theo:cmax}
  Let $C_{max}$ be the Makespan of $\mathcal{W} = n$ unit independent tasks scheduled by WS with latency algorithm.
    Then,

  \begin{align*}
      (i)  \qquad &\esp{C_{max}} \leq \frac{\mathcal{W}}{p} +  4\lambda\gamma\log_2(\frac{\mathcal{W}}{2\lambda}) + 3\lambda \\
      (ii) \qquad  &\Proba{ C_{max}  \ge \frac{\mathcal{W}}{p} +  4\lambda\gamma\log_2(\frac{\mathcal{W}}{2\lambda}) + 3\lambda + x} \le {2}^{-x}
    \label{expdiff}                                                                
  \end{align*}
    In particular:
  \begin{align*}
      (ii) \qquad  &\esp{C_{max}} \leq \frac{\mathcal{W}}{p} +  16.12\lambda\log_2(\frac{\mathcal{W}}{2\lambda}) + 3\lambda
  \end{align*}
\end{theorem}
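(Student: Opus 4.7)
The plan is to instantiate the program sketched in the Principle subsection: relate $\esp{C_{max}}$ to the expected number of steal requests via equation \eqref{Cmax}, and bound the latter by constructing a potential function $\phi(k)$ that decreases geometrically in expectation at every $\lambda$-period in which steal requests occur. The two critical design choices are the form of the potential and the granularity of the time discretization; the authors have already committed to a period of length $\lambda$, which suggests that inside one period all communication-initiated events (request arrives, work packet starts transit, work packet lands at thief) can be handled as a single compound step.

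First, I would fix a potential of the form $\phi_i(k) = \alpha\, w_i(k)^2 + \beta\, s_i(k)^2 + \gamma'\, s_i(k) w_i(k)$, with constants tuned so that: (a) if $P_i$ is only computing during the interval $(\lambda(k-1),\lambda k]$, then $w_i$ decreases by $\lambda$ and $\phi_i$ strictly decreases; (b) if $P_i$ is in the middle of sending, $s_i$ arrives at the thief and is zeroed, which has to be paired with the corresponding increase on the thief's side to keep the combined contribution non-increasing; (c) if a request is answered during the window, the split in equation \eqref{equ1} turns $w_i(k-1)$ roughly into two equal halves on $P_i$ and in $s_i$, and a direct computation shows $\phi_i$ decreases by a factor bounded away from $1$. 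One then has to verify that the redistribution of mass between processors never increases $\phi=\sum_i\phi_i$.

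Second, I would compute $q(r_k)$, the probability that a given processor is hit by at least one of the $r_k$ incoming requests, via the usual $1-(1-\tfrac{1}{p-1})^{r_k}$ estimate, and combine it with the per-processor contraction from step (c) to establish the key inequality $\esp{\phi(k+1)\mid\mathcal{F}_k}\le h(r_k)\phi(k)$ with $h(r)\le 1-\Omega(r/p)$. Iterating and taking expectations gives geometric decay of $\phi$; using the Steal-Threshold rule, the algorithm terminates once $\phi(k)$ falls below a threshold of order $\lambda^2$, since then no processor has more than $\lambda$ local units to share. Solving $\phi(0)=\Theta(\calW^2)$ down to this threshold bounds $\esp{\#\textit{StealRequests}}$ by $2p\gamma\log_2(\calW/(2\lambda))$, and plugging this into equation \eqref{Cmax} together with the at most $\lambda$ extra units of work the last computing processor can have yields the $+3\lambda$ additive term and claim (i). For (ii), I would observe that $\phi(k)\prod_{j<k} h(r_j)^{-1}$ is a supermartingale, apply Markov's inequality at the termination time to convert geometric decrease into a $2^{-x}$ tail on the number of extra $\lambda$-periods, and then (iii) follows by substituting $4\gamma\le 16.12$.

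The main obstacle is the first step: designing a potential that behaves well with in-transit work. The naive choice $\phi_i=w_i+s_i$ is invariant under the splitting of equation \eqref{equ1} and thus detects no progress from load balancing, while the purely local choice $\phi_i=w_i^2$ used in \cite{Denis2013} ignores the $s_i$ mass that will only land at the thief $\lambda$ units later and therefore can stay artificially low. The delicate accounting is to ensure that, across a full $\lambda$-window containing possibly concurrent send/receive/compute events, the cross term $s_iw_i$ and the delayed arrival cancel correctly so that $\phi$ is monotonically non-increasing off steal events and contracts by a constant factor on each steal, independently of $\lambda$. Pinning down the exact constant in this contraction is what ultimately determines the value $\gamma\approx 4.03$ that appears in the statement.
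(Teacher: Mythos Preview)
Your plan matches the paper's approach closely: the same potential-function-plus-supermartingale program, the same probability $q(r_k)=1-\bigl(\tfrac{p-2}{p-1}\bigr)^{r_k}$, the same conversion of geometric decay into a bound on $R$ via $X_k=\phi(k)\prod_{j<k}h(r_j)^{-1}$ and Markov's inequality. Two points of divergence are worth flagging.

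First, the paper does not need your cross term $\gamma' s_i(k) w_i(k)$; the specific choice $\phi_i(k)=w_i(k)^2+2s_i(k)^2$ already makes the case analysis go through. The factor $2$ on $s_i^2$ is what lets Case~1 (in-transit work landing at the thief) contract by $\tfrac23$ via the chain $\phi_i(k)\ge w_i(k)^2+2s_i(k)^2\ge 3s_i(k)^2$ while the post-arrival pair contributes at most $2s_i(k)^2$. Your more general ansatz is not wrong, but tuning three constants where two suffice will obscure rather than help.

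Second, your termination step is inverted. You propose to ``run until $\phi(k)$ falls below a threshold of order $\lambda^2$'' and then invoke the Steal Threshold, but the optional-stopping argument needs a \emph{lower} bound on $\phi(\tau)$, not an upper bound, in order to pass from $\esp{\phi(\tau)2^{R/(p\gamma)}}\le\phi(0)$ to $\esp{2^{R/(p\gamma)}}\le\phi(0)/(4\lambda^2)$. The paper instead defines $\tau$ as the first time every $w_i(k)\le 3\lambda$, argues that at step $\tau-1$ some processor still had $w_i>3\lambda$ and hence $\phi(\tau)\ge 4\lambda^2$ almost surely, and only then observes that at most $3$ further $\lambda$-periods are needed to drain the residual work. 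That is where the $+3\lambda$ comes from, not from ``the at most $\lambda$ extra units of work the last computing processor can have.'' This is a genuine gap in your sketch, though easily repaired once you see the right stopping-time definition.
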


The proof of this result is based on the analysis of the decrease of a
function -- that we call the \emph{potential}. The potential at
time-step $k\lambda$ is denoted $\phi(k)$ and is defined as 
\begin{equation*}
  \phi(k) = \sum_{i=1}^{p} \phi_i(k), \qquad \text{ where $\phi_i(k) =
    w_{i}(k)^{2} + 2s_{i}(k)^{2}$ \quad for $i\in\{1\dots,p\}$}.
\end{equation*}
This potential function always decreases.  It is maximal when all work
is contained in one processor which is the potential function at time
$0$ and is equal to $\phi(0) = \mathcal{W}^2$.  The schedule completes
when the potential becomes $0$.

We divide our proof of Theorem~\ref{theo:cmax} in two lemmas. First,
in Lemma~\ref{lem:decrease} we show that the expected decrease of the
potential can be bounded as a function of the number of work
requests. Second, in Lemma~\ref{lem:OfR}, we show how such a bound
leads to a bound on the expected number of work requests. 

We denote by $\mathcal{F}_{k}$ all events up to the interval
$((k-1)\lambda, k\lambda]$.
\begin{lemma}
  \label{lem:decrease}
  For all steps $k$, the expected ratio between $\phi(k+1)$ and $\phi(k)$ knowing $\mathcal{F}_{k}$ is bounded by: 
  \begin{equation}
    \mathbb{E}[\phi(k+1) \:|\: \mathcal{F}_{k}] \leq \left(1-\frac{q(r_k)}{4}\right)\phi(k) 
    \label{expdiff}
\end{equation}

    Where $q(r_k)$ is the probability for a processor to receive one
    or more requests in the interval $(\lambda(k-1),\lambda k]$ knowing that there are $r_k$ incoming steal requests.
\end{lemma}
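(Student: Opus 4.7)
The plan is to track $\phi_i$ for each processor $i$ separately across the interval $(k\lambda,(k+1)\lambda]$, sum the per-processor contributions, and take conditional expectation. The only events that can modify $\phi_i$ are local work execution (which always decreases $w_i^2$ quadratically) and steal-related events: either $i$ satisfies a new incoming request as victim, or a previously-started outgoing transfer from $i$ completes. The weight $2$ in front of $s_i^2$ in the definition of $\phi_i$ is tuned so that the splitting of work triggered by a successful steal reduces the combined potential of victim and thief by a factor of at least $3/4$, which is where the constant $1/4$ in the lemma comes from.

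Next, I would carry out the core calculation for the victim case. Suppose $i$ starts the interval with $w_i(k)=W$, $s_i(k)=0$, and is victim of a successful steal whose request arrives at time $k\lambda+\tau$ for some $\tau\in(0,\lambda]$. Applying equation~(1) at time $k\lambda+\tau$ and then equation~(2) for the remaining $\lambda-\tau$ time steps yields $w_i(k+1)=(W+\tau-\lambda)/2$ and $s_i(k+1)=(W-\tau-\lambda)/2$, so that
\[
\phi_i(k+1)=\frac{(W+\tau-\lambda)^2+2(W-\tau-\lambda)^2}{4},
\]
while the matched thief $j$ has $\phi_j(k+1)=0$ since the response only arrives at time $k\lambda+\tau+\lambda>(k+1)\lambda$. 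The numerator is a convex function of $\tau$, so its maximum on $[0,\lambda]$ is attained at one of the endpoints, and using the threshold condition $W\geq\lambda$ one obtains $\phi_i(k+1)+\phi_j(k+1)\leq\tfrac34\,\phi_i(k)$ in both cases. The remaining configurations can only decrease $\phi_i$: a processor that receives no satisfiable request merely consumes local work, while one with $s_i(k)>0$ completes its pending transfer, replacing the $2s_i(k)^2$ contribution by $w_j(k+1)^2\leq s_i(k)^2$ at the thief.

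Finally, I would take conditional expectation. By definition $q(r_k)$ is the probability that a given processor receives at least one request in the interval, so linearity of expectation gives
\[
\mathbb{E}[\phi_i(k+1)\mid\mathcal{F}_k]\leq(1-q(r_k))\,\phi_i(k)+q(r_k)\cdot\tfrac34\phi_i(k)=\left(1-\tfrac{q(r_k)}{4}\right)\phi_i(k),
\]
and summing over $i$ (with thief contributions already charged to the matched victim) completes the proof. The hard part will be the bookkeeping across coupled processors: every potential-raising event on a thief must be paired with the matching potential-dropping event on its victim so that the per-processor bounds aggregate without double-counting, and one must also check that simultaneous requests or blocked responses (threshold violations, in-flight transfers) do not invalidate the accounting. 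A secondary subtlety is the boundary case $W\approx\lambda$, where the endpoint estimates on the convex quadratic become tight; this is precisely why the threshold $w_i\geq\lambda$ is imposed in the model.
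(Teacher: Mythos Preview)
Your overall plan---per-processor case analysis, a $3/4$ drop for a successful victim, then averaging with weight $q(r_k)$---matches the paper's proof, and your convexity-in-$\tau$ argument for the victim case is in fact a bit cleaner than the paper's ``worst case at the endpoint'' remark. The genuine gap is in your treatment of processors with in-transit work, i.e.\ those with $s_i(k)>0$. Under the \emph{single work transfer} rule such a processor \emph{rejects} every incoming request during the interval, so the event ``$i$ receives a request'' does not trigger any $3/4$ drop there. Your displayed per-processor inequality
\[
\mathbb{E}[\phi_i(k+1)\mid\mathcal{F}_k]\le(1-q(r_k))\,\phi_i(k)+q(r_k)\cdot\tfrac34\,\phi_i(k)
\]
is therefore unjustified for these $i$. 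The only deterministic bound you supply for them---``$2s_i(k)^2$ is replaced by $w_j(k+1)^2\le s_i(k)^2$''---gives at best the ratio $(w_i(k)^2+s_i(k)^2)/(w_i(k)^2+2s_i(k)^2)$, which can be arbitrarily close to $1$, so summing over $i$ does not recover the factor $1-q(r_k)/4$.

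The missing ingredient is the second half of the work-division identity (equation~(2) in the model): when the pending transfer completes, the \emph{victim} too has exactly $s_i(k)$ work left, so in fact $w_i(k+1)\le s_i(k)$ as well as $w_j(k+1)\le s_i(k)$. Together with $w_i(k)\ge s_i(k)$ this yields $\phi_i(k+1)+\phi_j(k+1)\le 2s_i(k)^2$ while $\phi_i(k)\ge 3s_i(k)^2$, a deterministic ratio $\le 2/3$ for this case. The paper then concludes by taking $\max\bigl(2/3,\,1-q(r_k)/4,\,0\bigr)=1-q(r_k)/4$ across the three non-idle cases, rather than applying the $q(r_k)$-weighted expectation uniformly to every processor. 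Once you plug this deterministic $2/3$ bound into your case split, the rest of your argument goes through.
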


\begin{proof} 
  To analyze the decrease of the potential function, we distinguish
  different cases depending on whether the processor is executing
  work, sending or answering steal requests.  We show that each case
  contributes to a variation of potential.

  Between time $k\lambda$ and $(k+1)\lambda$, a processor $P_i$ does
  one the following things:
  \begin{enumerate}
  \item The processor is executing and sending work.
  \item The processor is executing work and available to respond to the steal requests sent by idle processors.
  \item The processor is executing work and will be idle soon.
  \item The processor is idle.
  \end{enumerate}
  We analyse below how much each case contributes to the decrease of
  the potential.
  \begin{cases}
  \item \textbf{:} $P_i$ is executing and started to send work to
    another idle processor $P_j$ before time $k\lambda$.  This means
    that $P_j$ will receive work before time $(k+1)\lambda$.  As we do
    not know when the communication finishes, we study the worst case
    in this scenario. In particular, we make as if (a) $P_i$ and $P_j$
    respond negatively to any steal request before the end of
    communication; and (b) they do not execute work.  Such events
    would decrease the potential function.
    
    By the ''work-division'' principle (Equation~\eqref{equ1}), once
    the steal request is completed we have
    $\calW_j(t)=\calW_i(t)=\calS_i(t)=s_i(k)$. This shows that the
    quantities of work and of work in transit at time $k+1$ satisfies
    \begin{equation*}
      w_i(k+1) \leq s_i(k) \mbox{ and } w_j(k+1) \leq s_i(k)
    \end{equation*}
    \begin{equation*}
      s_i(k+1) = 0 \mbox{ and } s_j(k+1) = 0 
    \end{equation*}
    Thus,
    \begin{equation*}                                                              
        \phi_i(k+1) = w_i(k+1)^2 + 2 s_i(k+1)^2 \leq w_i(k)^2                                      
    \end{equation*}
    \begin{equation*}                                                              
        \phi_j(k+1) = w_i(k+1)^2 + 2 s_i(k+1)^2 \leq w_i(k)^2                                      
    \end{equation*}
    Moreover by Equation~\eqref{equ2} we have $s_i(k)\le w_i(k)$.
    This shows that $\phi_i(k)=w_i(k)^2+2s_i(k)^2\ge
    3s_i(k)^2$.
    Hence, the ratio of potential each couple ($P_i$ as victim, $P_j$
    as thief) is less than~:
    \begin{equation*}                                                       \frac{\phi_i(k+1) + \phi_j(k+1)}{\phi_i(k) + \phi_j(k)} \leq  \frac{ 2s_i(k)^2 }{ 3s_i(k)^2 } \leq \frac{2}{3}
    \end{equation*}
  \item \textbf{:} $P_i$ is executing work and available to respond to
    a steal request. We distinguish two cases: (case 2a) if $P_i$
    receives a requests or (case 2b) if it does not receive a request.
    
    \textbf{Case 2a} -- We compute the ratio of potential between
    between $k\lambda$ and $(k+1)\lambda$ when $P_i$ receives one or
    more steal requests.  $P_i$ will respond to the first steal
    request. All other steal requests will fail.  The worst case is to
    receive the steal request in the end of the interval. This shows
    that
    \begin{equation*}
      w_i(k+1) \leq \frac{(w_i(k)-\lambda)+\lambda}{2} - 1 \mbox{ and } s_i(k+1) \leq \frac{(w_i(k) - \lambda)-  \lambda}{2},
    \end{equation*}
    which implies that
    \begin{equation*}
      w_i(k+1)^2 \leq \left(\frac{w_i(k)}{2} \right) ^2 \leq \frac{w_i(k)^2}{4}  
      \qquad \text{and}\qquad s_i(k+1)^2 = \frac{(w_i(k) - 2\lambda)^2}{4} \leq \frac{w_i(k)^2}{4}
    \end{equation*}
    This generates a ratio of potential between $k\lambda$ and
    $(k+1)\lambda$ for each couple ($P_i$ as victim, $P_j$ as thief)
    smaller than~:
    \begin{equation*}
      \frac{\phi_i(k+1) + \phi_j(k+1)}{\phi_i(k) + \phi_j(k)} \leq \frac{w_i(k+1)^2+2s_i(k+1)^2}{w_i(k)^2} \leq \frac{3w_i(k)^2}{4w_i(k)^2} \leq \frac{3}{4}  
    \end{equation*}
    
    \textbf{Case 2b} -- If $P_i$ does not receive any work requests,
    the work decreases,
    in which case 
    $\phi_i(k+1)/\phi_i(k) \le 1$.
    
 Let $q(r_k)$ be the probability that the
    processor $P_i$ receives a work request between $k\lambda$ and
    $(k+1)\lambda$.  To compute $q(r_k)$, we observe that $P_i$
    receives zero work requests if the $r_k$ thieves choose another
    processor.  Each of these events is independent and happens with
    probability $\frac{p-2}{p-1}$.  Hence, the probability that $P_i$
    receives one or more work requests is~:
    \label{Procavailable}
    \begin{equation}
      q(r_k) = 1 - \left(\frac{p-2}{p-1} \right)^{r_k}
      \label{proba}
    \end{equation}
    This shows that the ratio of potential in this scenario is:
    \begin{equation*}
      \frac{ \phi_i(k+1)} { \phi_i(k)} \leq \frac{3}{4}q(r_k) + (1-q(r_k)) \leq 1 - \frac{q(r_k)}{4}
    \end{equation*} 
    
\item  \textbf{:} $P_i$ with little amount of work $w_i(k) \leq \lambda$ and $s_i(k) = 0$,
in this case $P_i$ will respond negatively to any work requests and 
the potential function goes to $0$ and generates a ratio equal to $0$
\label{littleWork}

\item 
$P_i$ is idle, so it can be thief in $case 1$ which waits for work or thief in $case 2$ which sends a steal request.
    In both cases we already have taken its contribution to the potential into account.
\end{cases}

Using the variation of each these scenarios we find that the expected potential time $k+1$ is bounded by: 
\begin{equation*}
    \mathbb{E}[\phi(k+1)\:|\: \mathcal{F}_{k}] \leq \frac{2}{3}\sum_{i \in case1}\phi_i(k) + \left(1-\frac{q(r_k)}{4}\right)\sum_{i \in case2}\phi_i(k) + 0\times\sum_{i \in case3} \phi_i(k)
\end{equation*}

\begin{equation*}
    \mathbb{E}[\phi(k+1)\:|\: \mathcal{F}_{k}] \leq \max\left(\frac{2}{3},1-\frac{q(r_k)}{4},0\right)\left( \sum_{i \in case1}\phi_{i}(k)+ \sum_{i \in case2}\phi_i(k) + \sum_{i \in case3}\phi_i(k)\right ) 
  \end{equation*}
  Thus,
  \begin{equation*}
    \mathbb{E}[\phi(k+1)\:|\: \mathcal{F}_{k}] \leq
    \max\left(\frac{2}{3},1-\frac{q(r_k)}{4}\right)\phi(k) =
    \left(1-\frac{q(r_k)}3\right)\phi(k),
  \end{equation*}
  where the last inequality holds because $q(r_k)\le1$ and therefore $1-\frac{q(r_k)}{4} \geq
  \frac{3}{4}$.
\end{proof}

\begin{lemma}
  \label{lem:OfR}
  Assume that there exists a function $h$:
  ${0 \cdots p} \rightarrow [0, 1]$ such that the expected potential
  at time $k+1$ given $\mathcal{F}_{k}$ satisfies:
  \begin{equation}
    \mathbb{E}[\phi(k+1) \:|\: \mathcal{F}_{k}] \leq h(r_k)\phi(k) 
    \label{eq:expdiff}
  \end{equation}
  Let $\phi(0)$ denote the potential at time $0$ and let $\gamma$ be
  defined as:
  \begin{equation*}
    \gamma \eqde \max_{1 \leq r \leq p } \frac{r}{ -p \log_2(h(r))  } 
  \end{equation*} 
  
  Let $\tau$ be the first time step at which $w_i(\tau)$ is less than
  $3\lambda$ for all processors:
  \begin{align*}
  \tau \eqde \min \{k \mathrm{~s.t.~} \forall i \in \{1\dots p\} :
  w_i(k) \leq 3\lambda \}.    
  \end{align*}

 The number of incoming steal requests until
  $\tau$ is, $R = \sum_{s=0}^{\tau-1} r_s$, satisfies:
  \begin{align*}
      (i)\qquad  &\esp{R} \leq 2p\gamma\log_2(\frac{\mathcal{W}}{\lambda})  \\
      (ii)\qquad &\Proba{R\ge 2p\gamma\log_2(\frac{\mathcal{W}}{\lambda}) + x} \le {2}^{-x}
    \label{expdiff}                                                                
  \end{align*}
\end{lemma}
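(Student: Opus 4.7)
The plan is to turn the one-step recursion $\esp{\phi(k+1)\mid\mathcal{F}_k}\leq h(r_k)\phi(k)$ into an exponential supermartingale indexed by the cumulative number of requests $R_k=\sum_{j=0}^{k-1}r_j$, stop it at $\tau$, and then convert the resulting moment inequality into the two bounds via Jensen's and Markov's inequalities.

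First, I would rewrite the definition of $\gamma$ in the form $h(r)\leq 2^{-r/(p\gamma)}$. Setting $Z_k=\phi(k)\cdot 2^{R_k/(p\gamma)}$, the hypothesis~\eqref{eq:expdiff} combined with this inequality gives $\esp{Z_{k+1}\mid\mathcal{F}_k}\leq h(r_k)\cdot 2^{r_k/(p\gamma)}\cdot Z_k\leq Z_k$, so $(Z_k)$ is a nonnegative supermartingale with $Z_0=\phi(0)=\mathcal{W}^2$. The stopping time $\tau$ is almost surely finite (the total work is finite and strictly decreasing whenever some processor is active), so Doob's optional stopping applied to $Z_{k\wedge\tau}$ together with monotone convergence yields $\esp{Z_\tau}\leq\mathcal{W}^2$, i.e.\ $\esp{\phi(\tau)\cdot 2^{R/(p\gamma)}}\leq\mathcal{W}^2$.

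Second, I would prove the deterministic lower bound $\phi(\tau)\geq\lambda^2$. By the minimality of $\tau$, there exists at least one processor $P_i$ with $w_i(\tau-1)>3\lambda$, and I would lower-bound $\phi_i(\tau)$ by tracking the admissible transitions of this $P_i$ during the last window $((\tau-1)\lambda,\tau\lambda]$. If $P_i$ only executes, then $w_i(\tau)\geq w_i(\tau-1)-\lambda>2\lambda$, so $\phi_i(\tau)>4\lambda^2$. If $P_i$ is the victim of a successful steal during that window, the work-division identities \eqref{equ1} and \eqref{equ2} applied at the moment of the request, combined with the fact that $P_i$ could only have executed at most $\lambda-1$ units before the steal, force both $w_i(\tau)$ and $s_i(\tau)$ to remain above $\lambda/2$ in magnitude; the choice of the factor $3$ in the definition of $\tau$ is exactly what guarantees that in this case $\phi_i(\tau)\geq\lambda^2$. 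The remaining possibility, in which an older transfer is still in flight at step $\tau-1$ and completes inside the window, reduces once the transfer completes to one of the two previous situations and again yields $\phi_i(\tau)\geq\lambda^2$.

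Third, I would combine the two ingredients. The bound $\phi(\tau)\geq\lambda^2$ turns $\esp{\phi(\tau)\cdot 2^{R/(p\gamma)}}\leq\mathcal{W}^2$ into $\esp{2^{R/(p\gamma)}}\leq\mathcal{W}^2/\lambda^2$. Jensen's inequality on the convex map $x\mapsto 2^x$ yields $2^{\esp{R}/(p\gamma)}\leq\mathcal{W}^2/\lambda^2$, hence $\esp{R}\leq 2p\gamma\log_2(\mathcal{W}/\lambda)$, which is (i). Markov's inequality applied to $2^{R/(p\gamma)}$ at the threshold $(\mathcal{W}/\lambda)^2\cdot 2^{x/(p\gamma)}$ gives $\Proba{R\geq 2p\gamma\log_2(\mathcal{W}/\lambda)+x}\leq 2^{-x/(p\gamma)}$, which delivers (ii) after the rescaling of $x$ built into the statement. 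The main obstacle will be the second step: the case analysis that yields $\phi(\tau)\geq\lambda^2$ has to cover every transition allowed by the model, in particular the subtle situation in which a transfer initiated before $\tau-1$ completes during the final window and could in principle drain both $w_i$ and $s_i$. The supermartingale setup and the subsequent Jensen and Markov inferences are essentially routine once the hypothesis of the lemma is available.
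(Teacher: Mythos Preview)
Your proposal is correct and follows essentially the same route as the paper: build the nonnegative supermartingale $\phi(k)\,2^{R_k/(p\gamma)}$ from the inequality $h(r)\le 2^{-r/(p\gamma)}$, apply optional stopping at $\tau$, lower-bound $\phi(\tau)$ by a constant multiple of $\lambda^2$ using the minimality of $\tau$, and finish with Jensen and Markov. The only cosmetic difference is that the paper's case analysis for the lower bound on $\phi(\tau)$ is terser (it simply splits on whether the distinguished processor received a steal request in the final window) and asserts $\phi(\tau)\ge 4\lambda^2$ rather than your $\lambda^2$; either constant suffices for the bound $2p\gamma\log_2(\calW/\lambda)$ in the statement, and both proofs end, as you note, with $2^{-x/(p\gamma)}$ before the implicit rescaling of $x$.
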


\begin{proof} 
  By definition of $\gamma$, for a quantity $r_k$, we have
  $\log_2(h(r_k)) \leq \frac{r_k}{ -p \gamma }$ and therefore
  $h(r_k)\le 2^{-r_k/(p\gamma)}$. By using
  Equation~\eqref{eq:expdiff}, this shows that
  \begin{align*}
    \esp{\phi(k+1) \mid \calF_k}&\le h(r_k)\phi(k)\\
    &\le \phi(k)2^{-r_k/(p\gamma)}. 
  \end{align*}
  Let $X_{k}=\phi(k)\prod_{i=0}^{k-1}2^{r_i/(p\gamma)}$. By using the
  equation above, we have:
  \begin{align*}
    \esp{X_{k+1}\mid\calF_k} &= \esp{\phi(k+1) \mid \calF_k}
                               \prod_{i=0}^{k}2^{r_i/(p\gamma)}\\
                             &\le \phi(k) 2^{-r_k/(p\gamma)}
                               \prod_{i=0}^{k}2^{r_i/(p\gamma)}=X_k
  \end{align*}
  This shows that $(X_k)_k$ is a submartingale for the filtration
  $\calF$. As $\tau$ is a stopping time for the filtration $\calF$,
  Doob's optional stopping theorem (see \emph{e.g.},
  \cite[Theorem~4.1]{durrett1996probability}) implies that
  \begin{equation}
    \esp{X_\tau} \le \esp{X_0}. 
    \label{eq:doob}
  \end{equation}
  By definition of $X$, we have $X_0=\phi(0)$ and
  $X_\tau=\phi(\tau)2^{R/(p\gamma)}$. Hence, this implies that
  \begin{align}
    \esp{\phi(\tau)2^{R/(p\gamma)}} \le \phi(0). 
    \label{eq:doob2}
  \end{align}
  
  Recall that $\tau$ is the first interval in which each processor has
  an amount of work less than~$3\lambda$, $\tau = \min$\{$k$:
  $\forall i \in [1,p] $ : $ w_i(k) \leq 3\lambda $\}.  This means
  that at $\tau-1$, there exists at least one processor $i$ with
  $w_i(\tau-1) > 3\lambda$.  If this processor received a steal
  request between $\tau-1$ and $\tau$, we have
  $w_i(\tau) \geq \lambda$. If this processor did not receive a steal
  request between $\tau-1$ and $\tau$, we have
  $w_i(\tau)>2\lambda$. This implies that
   \begin{equation*}  
     \phi(\tau) \ge\phi_i(\tau) \ge w_i^2(\tau) + 2s_i^2(\tau) \geq
     4\lambda^2 \qquad\mathrm{a.s.}
   \end{equation*}
   Plugging this into Equation~\eqref{eq:doob2} shows that 
   \begin{align}
     \esp{2^{R/(p\gamma)}} \le \frac{\phi(0)}{4\lambda^2}.
     \label{eq:doob3}
   \end{align}
   By Jensen's inequality (see \emph{e.g.},
   \cite[Equation~(3.2)]{durrett1996probability}), we have
   $\esp{R/(p\gamma)}\le\log_2\esp{2^{R/(p\gamma)}}$. This shows that
   \begin{align*}
     \esp{R} &\le p\gamma \log_2\frac{\phi(0)}{4\lambda^2}\\
             &= p \gamma \left(\log_2\phi(0) - 2\log_2(2\lambda)\right).
   \end{align*}
   As $\phi(0)\le\calW^2$, we have $\log_2(\phi(0))\le
   2\log_2\calW$. Hence:
   \begin{align}
     \esp{R} &\le 2p \gamma \left(\log_2\calW - \log_2(2\lambda)\right).
               \label{logratiophi}
   \end{align}
   By Markov's inequality, Equation~\eqref{eq:doob3} implies that for
   all $a>0$:
   \begin{align*}
     \Proba{2^{R/(p\gamma)}\ge a} \le \frac{\calW^2}{4a\lambda^2}.
   \end{align*}
   By using $a=\calW^2/(4\lambda^2)2^{x/(p\gamma)}$, this implies that
   \begin{align*}
       \Proba{R \ge 2p\gamma(\log_2 W - \log_2(2\lambda)) + x} 
     &=
       \Proba{2^{R/p\gamma} \ge (W/2\lambda)^2 2^{x/(p\gamma)}} \le  2^{-x/(p\gamma)}
   \end{align*}
 \end{proof}

 We are now ready to conclude the proof of Theorem~\ref{theo:cmax}, by
 applying the previous lemma with $h(r_k) = (1-\frac{q(r_k)}{4})$. 

 \begin{proof}[Proof of Theorem~\ref{theo:cmax}] The number of
   incoming steal requests until $\tau$ is bounded by
   Lemma~\ref{lem:OfR}~$(i)$.  Using the definition of $\tau$, there
   remain at most 3 steps of $\lambda$ to finish the execution.  By using
   Equation~\eqref{Cmax} we have,
\begin{align*}
    \esp{\mathcal{C}_{max}}  &\leq \frac{\calW}{p} + \frac{\esp{R}}{p}  2\lambda + 3\lambda 
                             &\leq \frac{\calW}{p} + 4\lambda\gamma\log_2(\frac{\mathcal{W}}{2\lambda}) + 3\lambda
\end{align*}
    By the same way, we use Lemma~\ref{lem:OfR}~$(ii)$ and equation \ref{Cmax} we find:  
    
\begin{align*}
      &\Proba{ C_{max}  \ge \frac{\mathcal{W}}{p} +  4\lambda\gamma\log_2(\frac{\mathcal{W}}{2\lambda}) + 3\lambda + x} \le {2}^{-x}
\end{align*}


We now show that the above constant $\gamma$ satisfies
$\gamma<4.03$. For that, we first show that the quantity $r/-\log(1-q(r)/4)$ is increasing in $r$. 
Then, we use the maximum of $r$ and we bound the value $(p-1)/-\log(1-q(p-1)/4)$ by $(p-1)/(2-\log_2(3+1/e))$.

Let $f(r) \eqde -\log_2(1-q(r)/4)$ and $g(r)= r / f(r)$. By definition of $q(r)$, $f(r)$ can be written as: 
\begin{align*}
    f(r) = -\log_2\left( \frac{3}{4}  + \frac{1}{4}\left(1-\frac{1}{p-1}\right)^r \right) =  2 - \log_2\left( 3 + \left(1-\frac{1}{p-1}\right)^r \right)
\end{align*}
Denoting $v \eqde 1 - 1/(p-1)$ and $x \eqde v^r$, the derivative of $g$ with respect to $r$~is:  

\begin{align*}
    g'(r) &= \frac{ v^rr\ln(r) - v^r(\ln(3+v^r) - 2\ln(2)) - 3\ln(3+v^r) + 6\ln(2) }{(3+v^r)f(r)^2\ln(2)}\\ 
          &=  \frac{ x\ln(x) - x(\ln(3+x) - 2\ln(2)) - 3\ln(3+x) + 6\ln(2) }{(3+x)f(r)^2\ln(2)}\\ 
          &=  \frac{ x\ln(x) - \ln(3+x)(3 + x) + 2\ln(2)(x+3) }{(3+x)f(r)^2\ln(2)}\\ 
\end{align*}
As $x<1$, the derivative of $x\ln(x) - \ln(3+x)(3 + x) + 2\ln(2)(x+3)$  with regard to $x$ is $2\ln(2)-\ln(3+x)+\ln(x) < 0$.
This shows that $x\ln(x) - \ln(3+x)(3 + x) + 2\ln(2)(x+3) < \ln(1)-4\ln(3+1)+\ln(1) < 0 $, thus $g$ is decreasing. Using the fact that for all $ p \geq 2$: 

\begin{align*}
    \left( 1 - \frac{1}{p-1} \right)^{p-1} &= \exp \left((p-1)\ln(1-\frac{1}{p-1}) \right) 
    &\leq \exp \left(-(p-1)\frac{1}{p-1}) \right) = \frac{1}{e} 
\end{align*}

Then,
\begin{align*}
    \gamma = \max_{1 \leq r \leq p} \frac{1}{p}g(r) = \frac{1}{p}g(p-1) &\leq \frac{1}{ 2 - \log_2\left(3 + \left(1 - \frac{1}{p-1} \right)^{p-1}  \right)  } \\
           &\leq \frac{1}{ 2 - \log_2\left(3 + \frac{1}{e}  \right) } <  4.03 
\end{align*}

\end{proof}



\section{Experiments Analysis}
\label{sec:experiemnts}

In the previous section, we proved a new upper bound of the Makespan
of WS with an explicit latency.  The objective of this section is to
study WS experimentally in order to confirm the theoretical results
and to refine the constant $\gamma$. We developed to this end an \textit{ad-hoc}
simulator that follows strictly our model. 

We start by describing our WS simulator and the considered test configurations.
Using the experimental results, we show that our theoretical bound is close to the experimental results. 
We conclude with a discussion on these observations. 

\subsection{Simulator and configurations}
We have developed a python discrete event simulator for running adequate
experiments.  This simulator follows the model described in
Section~\ref{sec:wsmechanisms} to schedule an amount $\mathcal{W}$ of
work on a distributed platform composed of $p$ identical
processors.  Between each two processors, the communication cost is
equal to the latency $\lambda$.  These three parameters are
configurable and the simulator is generic enough to be used in
different contexts of online scheduling and interfaces with standard
trace analysis tools.  To ensure reproducibility, the code is
available on
github\footnote{\url{https://github.com/wagnerf42/ws-simulator}}.

Let us describe our experimental parameters.  
We consider constant speed processors, which means that the work can be described in a time unit basis, 
and the same holds for the latency.  
Ultimately, only the ratio between
$\frac{W}{p}$ and $\lambda$ matters.  Similar results would be
observed by multiplying $\frac{W}{p}$ and $\lambda$ by the same
constant.  For our tests we take different parameters with
$\mathcal{W}$ between $10^5$ and $10^8$, $p$ between 32 and 256 and
$\lambda$ between 2 and 500. Each experiment has been reproduced 1000
times.

\subsection{Validation of the bound}

As seen before, the bound of the expected Makespan consists of two
terms: the first term is the ratio $\mathcal{W}/{p}$ which does not
depend of the configuration and the algorithm, and the second term
which represents the overhead related to steal requests.  Our analysis
bounds the second term to derive our bound on the Makespan.

Therefore, to analyze its validity, we define the overhead ratio as
the ratio between the second term of our theoretical bound
($4\gamma\lambda\log_2(\mathcal{W}/\lambda)+3\lambda$) and the
execution time simulated minus the ratio $\mathcal{W}/p$.  We study
this overhead ratio under different parameters $\mathcal{W}$, $p$ and
$\lambda$.
\begin{figure}
  \includegraphics{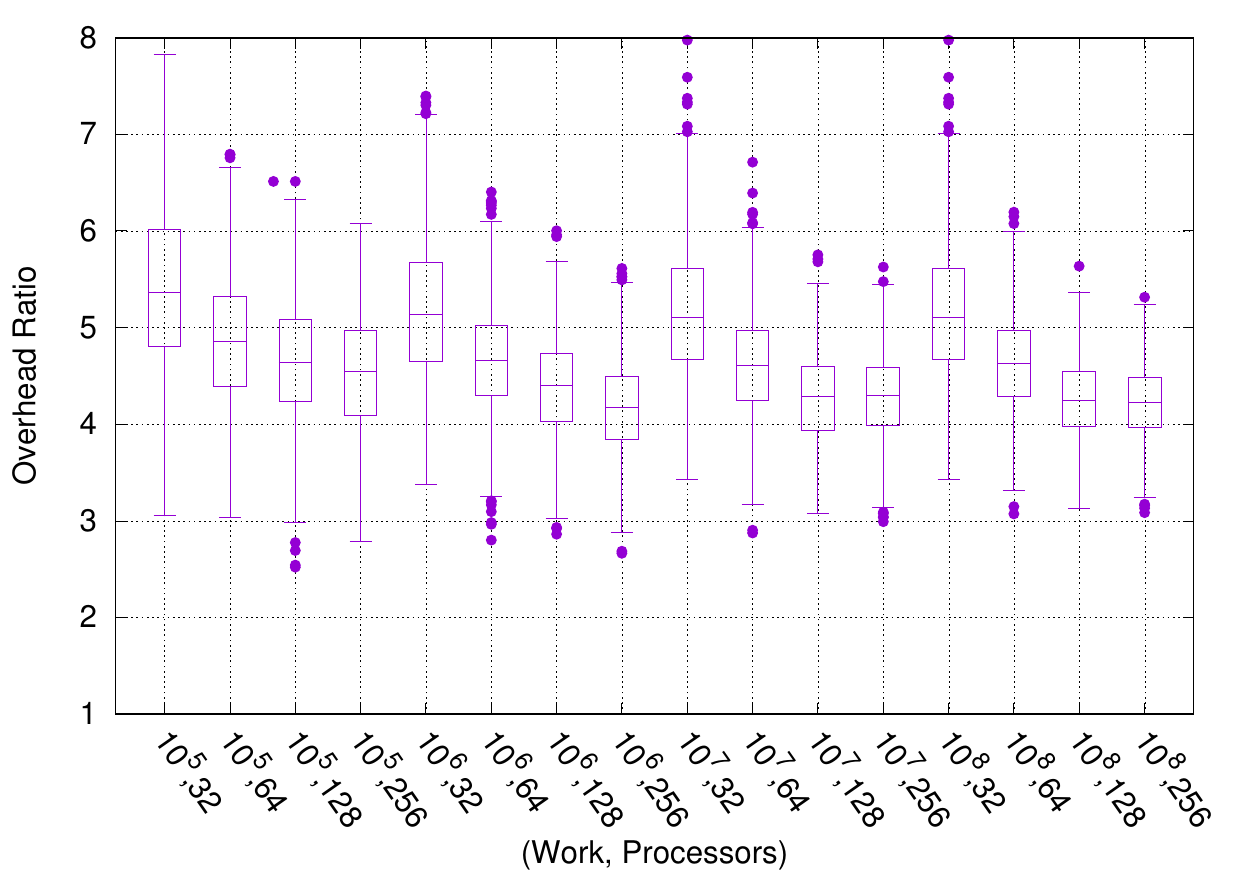}
  \caption{Overhead ratio as a function of $(\mathcal{W},p)$
    ($\lambda = 262$)}
  \label{fig:accuracy}
\end{figure}

Figure~\ref{fig:accuracy} plots the overhead ratio according to each
couple $(\mathcal{W}, p)$, for a latency of $262$ (1000 runs).
Similar observations have been observed with all values of latency
used.  The~x-axis is $(\mathcal{W}, p)$ for all values of
$\mathcal{W}$ and $p$ intervals and the y-axis shows the overhead
ratio.  We use here a BoxPlot graphical method to present the
results. BoxPlots give a good overview and a numerical summary of a
data set.  The “interquartile range” in the middle part of the plot
represents the middle quartiles where 50\% of the results are
presented.  The line inside the box presents the median. The whiskers
on either side of the IQR represent the lowest and highest quartiles
of the data.  The ends of the whiskers represent the maximum and
minimum of the data, and the individual points beyond the whiskers
represent outliers.

We observe that our bound is systematically about 4 to 5.5 times greater
to the one computed by simulation (depending on the range of parameters). The ratio between the two bounds
decreases with the number of processors but seems fairly independent
to $\calW$.

\subsection{Discussion}

The challenge of this paper is to analyze WS algorithm with an
explicit latency.  We presented a new analysis which derives a bound on the
expected Makespan for a given $\mathcal{W}$, $p$ and~$\lambda$.  
It shows that the expected Makespan is bounded by $\calW/p$ plus an
additional term bounded by $4\gamma\approx16$ times
$\lambda\log_2(\calW/(2\lambda))$.  As observed in
Figure~\ref{fig:accuracy}, the constant $4\gamma$ is about four
to five times larger than the one observed by simulation.  A more
precise fitting based on simulation results leads to the expression
$\mathcal{W}/p + 3.8\lambda\log_2(\mathcal{W}/\lambda)$.
We briefly review here the main steps of the proof involving
approximations and conclude with a simple fit of results on the best
constant matching the expected Makespan.

First, recall that our analysis relies on $r_k$, the number of steal
requests arriving at each time interval.  The exact values of all
$r_k$ are unknown and we rely on worst case majorations.  In the proof of
Theorem~\ref{theo:cmax} we define $\gamma$ as the maximal value of
$\frac{r}{-p\log_2(h(r))}$. The function
$r\mapsto\frac{r}{-p\log_2(h(r))}$ is about $2.8$ for small values of
$r$ and tends to $4.03$ when the number of work requests $r$ is
maximal. In the simulation, we observe that the number of work
requests is often low. Therefore, our definition of $\gamma$ probably
contributes a factor $4/2.8\approx1.4$ to the overhead ratio.

The second approximation done is while we computed the diminution
of the potential using the maximum diminution in all cases described
in the proof of Lemma~\ref{lem:decrease}.  This analysis could be
improved by taking a more complex potential function but this
will lead to much harder computations for marginal improvements.

The third approximation is that we assumed that we
do not know when exactly steal requests arrive in the interval.  We
therefore always took the worst case (arrivals at the end of the intervals). We believe that this
approximation has only a minor effect on the overhead ratio since it
mainly impacts the potential diminution obtained from computations (as
opposed to stealing).

Finally, the value of $\gamma$ depends on $p$. 
To achieve a constant bound, we consider again
the worst case obtained as $p$ tends to infinity.  This explains the
fact that the overhead ratio increases slightly when the number
of processors decreases ($4.5$ for $256$ processors and $5$ for $32$
processors).


\section{Conclusion}
\label{sec:conclusion}

We presented in this paper a new analysis of Work Stealing algorithm
where each communication has a communication latency of $\lambda$.  Our main result
was to show that the expected Makespan of a load of $W$ on a cluster of
$p$ processors is bounded by $W/p+16.12\lambda\log_2(W/(2\lambda))$. 

We based our analysis on potential functions to bound the expected
number of steal requests.  We therefore derived a theoretical bound on the
expected Makespan.  We also extend this analysis one step further, by
providing a bound on the probability to exceed the bound of the
Makespan.

To assess the tightness of this analysis we developed an
\textit{ad-hoc} simulator.  We showed by comparing the theoretical bound and
the experimental results that our bound is realistic.  We observed moreover that
our bound (established on worst case analysis) is 5 times greater than the experimental results 
and it is stable for all the tested values.

This work will certainly be the basis of incoming studies on more
complex hierarchical topologies where communications matter. As such, it is important as it allows
a full understanding of the behavior of various Work Stealing
implementations in a base setting.





\end{document}